\date{}
\newtheorem{theorem}{Theorem}
\newtheorem{lemma}[theorem]{Lemma}
\newtheorem{example}[theorem]{Example}
\title{The equivalence of linear codes implies semi-linear equivalence}
\begin{document}
\author{Simeon Ball and James Dixon\thanks{15 July 2021.}} 
\maketitle

\begin{abstract}
We prove that if two linear codes are equivalent then they are semi-linearly equivalent. We also prove that if two additive MDS codes over a field are equivalent then they are additively equivalent.
\end{abstract}

\section{Introduction}

Let $F$ be a finite set. A {\em code} $\mathcal A$ of {\em length} $n$ is a subset of $F^n$. The Hamming distance $d(u,v)$ between $u,v \in F^n$ is the number of coordinates in which $u$ and $v$ differ. As in \cite{KDO2015}, we say that two codes $\mathcal A$ and $\mathcal B$ are {\em equivalent} if, after some permutation $\alpha$ of the coordinates of the elements of $\mathcal A$, there exist permutations $\sigma_i$ of $F$ such that for all $u=(u_1,\ldots,u_n) \in \mathcal A$, 
$$
(\sigma_1(u_1),\ldots,\sigma_n(u_n)) \in \mathcal B.
$$
Let $\psi$ denote the bijection from $\mathcal A$ to $\mathcal B$ induced by these permutations. Then it is clear that $\psi$ preserves Hamming distance; that is
$$
d(u,v)=d(\psi(u),\psi(v))
$$
for all $u,v \in \mathcal A$.


If $F$ is a finite field and $\mathcal A$ and $\mathcal B$ are $k$-dimensional subspaces of $F^n$ then we say that $\mathcal A$ and $\mathcal B$ are {\em linear codes}. Two codes are $\mathcal A$ and $\mathcal B$ are {\em linearly equivalent} in the above definition of equivalence if, for all $i \in \{1,\ldots,n\}$, 
$$
\sigma_i(x)=\lambda_i x,
$$
for some non-zero $\lambda_i \in F$. In the case that $F$ has non-trivial automorphisms (fields of non-prime order), for a fixed automorphism $\beta$ of $F$, applying the permutation
$$
\sigma_i(x)=\lambda_i x^{\beta},
$$
to the $i$-th coordinate of the codewords of $\mathcal A$, we get a linear code $\mathcal B$ which is equivalent to $\mathcal A$. 
This more general equivalence we will call {\em semi-linear equivalence}, following \cite{BBFKKW2006}, \cite{Fripertinger2005} and \cite{LW2011}. It is also called {\em P$\Gamma$L-equivalence} \cite{BGL2021} and is referred to as simply {\em equivalence} in \cite{Bouyukliev2007} and \cite{HP2003}.


The main result of this note is the following theorem.

\begin{theorem} \label{mainthm}
Two linear codes are equivalent if and only if they are semi-linearly equivalent.
\end{theorem}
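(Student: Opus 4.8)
The ``if'' direction is immediate, since each transformation $\sigma_i(x)=\lambda_i x^{\beta}$ is in particular a permutation of $F$. For the converse, let $\mathcal A$ and $\mathcal B$ be equivalent linear codes. Relabelling coordinates we may assume the coordinate permutation is trivial, so there are permutations $\sigma_i$ of $F$ with $\psi(u)=(\sigma_1(u_1),\dots,\sigma_n(u_n))\in\mathcal B$ for all $u\in\mathcal A$, and $\psi\colon\mathcal A\to\mathcal B$ is a bijection. Since $\psi(0)\in\mathcal B$ and $\mathcal B$ is a subspace, replacing each $\sigma_i$ by $x\mapsto\sigma_i(x)-\sigma_i(0)$ yields another such equivalence with $\sigma_i(0)=0$; then $\psi(u)_i=0$ exactly when $u_i=0$, so $\psi$ preserves supports, hence preserves the canonical decomposition of a code into indecomposable direct summands. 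Thus it suffices to treat $\mathcal A$ indecomposable. If $\dim\mathcal A\le 1$ the claim is immediate, an indecomposable one-dimensional code having a generator with no zero entry, and any two such of equal length being linearly equivalent. So assume $k=\dim\mathcal A\ge 2$; then $\mathcal A$ has no zero columns, the matroid on the columns of a generator matrix is connected of rank $\ge 2$, and $\psi$ identifies this matroid with that of $\mathcal B$.

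First I would prove that each $\sigma_i$ is additive, so that $\psi$ is an $\mathbb F_p$-linear isomorphism ($p=\operatorname{char}F$). Connectedness provides, for each coordinate $i$, a circuit of columns through $i$; the subcode of $\mathcal A$ supported on such a circuit $\{i_1,\dots,i_m\}$ is $\{v\in F^m:\textstyle\sum_r c_rv_r=0\}$ for scalars $c_r\neq0$, and the corresponding subcode of $\mathcal B$ has the same shape with scalars $c'_r\neq0$ (the matroids agree). Applying $\psi$ coordinatewise gives $\sum_r c'_r\sigma_{i_r}(v_r)=0$ whenever $\sum_r c_rv_r=0$. Setting all but two of the $v_r$ to $0$ expresses each $\sigma_{i_r}$ in terms of $\sigma_{i_1}$ up to pre- and post-composition with scalar multiplications, and then the full relation, after substituting these and using $\sigma_{i_1}(0)=0$, collapses to $\sigma_{i_1}(x+y)=\sigma_{i_1}(x)+\sigma_{i_1}(y)$. (The three-column case, where the subcode is $\{(s,t,as+bt):s,t\in F\}$, already exhibits the computation.) Hence $\psi$ is additive.

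Next I would upgrade additivity to $F$-semilinearity. Fix independent coordinates $i,j$. Changing the generator matrix of $\mathcal A$ by an element of $GL_k(F)$, and likewise for $\mathcal B$, we may take the $i$-th and $j$-th columns of both codes to be $e_1$ and $e_2$; after rescaling columns, every column of $\mathcal A$ lying in the plane $\langle e_1,e_2\rangle$ and parallel to neither is $(1,s)$ for some $s\in F^{*}$, say in coordinate $l_s$, and the $l_s$-th column of $\mathcal B$ is $(1,t_s)$. The relation between the codes, together with additivity, forces $\sigma_{l_s}=\sigma_i$ and $\sigma_i(sx)=t_s\,\sigma_j(x)$ for all $x$; comparing two of these gives $\sigma_i(\mu y)=\nu\,\sigma_i(y)$ for every $\mu$ that is a ratio of two of the available $s$-values, where $\nu\in F^{*}$ depends on $\mu$, and the set $H$ of $\mu\in F^{*}$ admitting such a $\nu$ is a subgroup of $F^{*}$. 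The crux is to show, when $\mathcal A$ is rich enough in columns, that propagating these relations through all pencils meeting coordinate $i$ (using connectedness of the matroid) forces $H=F^{*}$. Given that, $\mu\mapsto\nu(\mu)$ is a group homomorphism $F^{*}\to F^{*}$ which, combined with the additivity of $\sigma_i$, extends to a field endomorphism of $F$, necessarily an automorphism $\beta$; hence $\sigma_i(x)=\sigma_i(1)\,x^{\beta}$. Since circuits relate the various $\sigma$'s only up to scalar multiplications, which leave the automorphism part unchanged, a single $\beta$ serves the whole component, so $\psi$ restricted to the component is a semilinear equivalence; recombining the components gives the theorem.

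The step I expect to be hardest is precisely the last one — extracting honest $F$-semilinearity, rather than mere additivity, from the combinatorics. Inside a single pencil the identities $\sigma_i(\mu y)=\nu\sigma_i(y)$ only reach the subgroup generated by the ratios of columns actually present, and for codes with very few columns this subgroup can be proper — in which case $\psi$ itself need not be semilinear at all. There one must argue separately, showing directly that $\mathcal A$ and $\mathcal B$ are linearly equivalent because $\operatorname{PGL}_k(F)$ acts transitively on the small column configurations in question; marrying this ``sparse'' case to the ``rich'' case where the scaling relations can be propagated is where I expect the real work to lie.
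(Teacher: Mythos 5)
Your reduction to the case of a trivial coordinate permutation with $\sigma_i(0)=0$, and your circuit-based proof that each $\sigma_i$ is additive, are sound and essentially parallel the paper's Lemma~\ref{additivelemma} (your three-column circuit computation is exactly equation (\ref{eq4}) there, packaged in matroid language). The genuine gap is in the final step, and you have in effect flagged it yourself: the passage from additivity to semi-linear equivalence is only a programme, not a proof. Your pencil-propagation argument establishes $\sigma_i(\mu y)=\nu(\mu)\sigma_i(y)$ only for $\mu$ in the subgroup $H$ generated by ratios of columns actually present, you do not prove that $H=F^{*}$ in the ``rich'' case, and you leave the ``sparse'' case (where, as you correctly note, the individual $\sigma_i$ genuinely need not be semilinear) to an unspecified transitivity argument. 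Since the theorem must cover codes with arbitrarily few columns, this case split is not a technicality to be married later --- it is the whole difficulty, and as written the proof does not close.

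The paper avoids the dichotomy entirely by changing what is to be proved: rather than showing each $\sigma_i$ is semilinear (which is false in general), it shows the generator matrices are related by a field automorphism up to row and column scalings. Concretely, once each $\sigma_j$ is additive one writes $\sigma_j(x)=\sum_i c_{ji}x^{p^i}$ and substitutes into the relation $\sum_j \beta_{jr}\sigma_j(a_j)=\sigma_r\bigl(\sum_j\alpha_{jr}a_j\bigr)$ coming from the standard-form generator matrices; since the functions $a_j\mapsto a_j^{p^i}$ are linearly independent, comparing coefficients yields $\beta_{jr}c_{ji}=c_{ri}\alpha_{jr}^{p^i}$ for all $i,j,r$. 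Fixing $t$ with $c_{1t}\neq 0$, these identities let one rescale the rows and columns of $B$ to obtain $A^{p^t}$, so $\mathcal B$ is linearly equivalent to the image of $\mathcal A$ under $x\mapsto x^{p^t}$, with no hypothesis on how many columns the code has. If you want to salvage your route, you would need to replace the goal ``$\sigma_i$ is semilinear'' by a statement about the existence of \emph{some} semilinear equivalence, and the coefficient-comparison identity above is the cleanest way to extract one; I would recommend adopting it in place of the pencil propagation.
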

The reverse implication is trivial based on the above discussion, so the goal here is to prove that if $\mathcal A$ and $\mathcal B$ are equivalent linear codes then they are semi-linearly equivalent.

Theorem~\ref{mainthm} can be compared to Theorem 1.5.10 in \cite{BBFKKW2006}. In that theorem they prove that if there is a Hamming distance preserving bijection of $F^n$ which maps subspaces to subspaces then this map is semi-linear, i.e. it is additive and $\sigma(\lambda x)= \lambda^{\beta} \sigma(x)$, for all $\lambda \in F$, where $\beta$ is an automorphism of $F$. 

Theorem~\ref{mainthm} has significant implications when classifying codes. For example, in a classification of codes up to equivalence, as in \cite{Alderson2006}, \cite{KDO2015} and \cite{KO2016}, one will not find two linear codes in the same equivalence class. Perhaps of more interest, in a classification of linear codes up to semi-linear equivalence, as in \cite{BGL2021} and \cite{BBK2020}, we can now be sure that two semi-linearly inequivalent codes are also inequivalent.

\section{Equivalence implies semi-linear equivalence.}

Let $\mathcal A$ and $\mathcal B$ be $k$-dimensional linear codes over a finite field ${\mathbb F}_q$ of length $n$ which are equivalent. Since $\alpha(\mathcal A)$ is also a linear code equivalent to $\mathcal B$, we can assume that $\alpha$, the permutation of the coordinates of the elements of $\mathcal A$, is the identity by replacing $\mathcal A$ by $\alpha(\mathcal A)$. Indeed, by re-ordering the coordinates in both codes, if necessary, we can find generator matrices $\mathrm A$ and $\mathrm B$ in standard form for $\mathcal A$ and $\mathcal B$ respectively. Recall that a generator matrix in standard form is a $k\times n$ matrix in which the initial $k \times k$ sub-matrix is the identity matrix.

As above, we denote by $\sigma_i$ the permutation in the $i$-th coordinate which has the property that for all $u=(u_1,\ldots,u_n) \in \mathcal A$, 
$$
(\sigma_1(u_1),\ldots,\sigma_n(u_n)) \in \mathcal B.
$$
 
 \begin{lemma} \label{additivelemma}
 If not all of the columns of $A$ and $B$ are of weight one then, for each $j \in \{1,\ldots,n\}$, the map
 $$
 \tau_j(x)=\sigma_j(x)-\sigma_j(0)
 $$
 is additive.
 \end{lemma}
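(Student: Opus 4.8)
The plan is to normalise so that $\sigma_i(0)=0$, extract additivity of a single coordinate from a column of weight $\geq 2$, and then propagate to all coordinates.

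\emph{Normalisation.} Since $\mathbf 0\in\mathcal A$, the word $(\sigma_1(0),\dots,\sigma_n(0))$ lies in $\mathcal B$, so $\mathcal B$ is unchanged by translating it by this word; hence the permutations $\tau_i(x)=\sigma_i(x)-\sigma_i(0)$ themselves satisfy $(\tau_1(u_1),\dots,\tau_n(u_n))\in\mathcal B$ for all $u\in\mathcal A$, and replacing $\sigma_i$ by $\tau_i$ we may assume $\sigma_i(0)=0$; the goal becomes that every $\sigma_j$ is additive. Note then that $\psi=(\sigma_1,\dots,\sigma_n)\colon\mathcal A\to\mathcal B$ is a bijection fixing $\mathbf 0$, and, as each $\sigma_i$ is a bijection with $\sigma_i(0)=0$, it preserves supports: $\mathrm{supp}\,\psi(u)=\mathrm{supp}\,u$.

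\emph{One heavy column.} By hypothesis some column of $A$ has weight $\geq 2$ (the argument is symmetric in $A$ and $B$ via $\psi^{-1}$); say column $\ell$, with nonzero entries $a_t\neq 0$ in rows $m_1,\dots,m_s$, $s\geq 2$. As $A$ is in standard form, $u_\ell=\sum_{t=1}^s a_t u_{m_t}$ for all $u\in\mathcal A$, so the projection of $\mathcal A$ onto the coordinates $S=\{m_1,\dots,m_s,\ell\}$ is the $s$-dimensional subspace $W=\{(t_1,\dots,t_s,\sum_t a_t t_t):t_i\in\mathbb F_q\}$ of $\mathbb F_q^{s+1}$. Its image under $\prod_{i\in S}\sigma_i$ equals the projection of $\mathcal B$ onto $S$, hence is a subspace of $\mathbb F_q^{s+1}$, and has cardinality $q^s$ (the $\sigma_i$ are bijections); so it is the zero set of a single nonzero linear form $\sum_{i\in S}\beta_i v_i=0$. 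Since this image projects onto all of $\mathbb F_q^s$ in the $m$-coordinates, that projection is an isomorphism, forcing $\beta_\ell\neq 0$ and $v_\ell=-\beta_\ell^{-1}\sum_t\beta_{m_t}v_{m_t}$ on the image. Evaluating on a general point $\bigl(\sigma_{m_1}(t_1),\dots,\sigma_{m_s}(t_s),\sigma_\ell(\sum_t a_t t_t)\bigr)$ gives
\[
\sigma_\ell\Bigl(\sum_t a_t t_t\Bigr)=-\beta_\ell^{-1}\sum_t\beta_{m_t}\sigma_{m_t}(t_t).
\]
Setting all but one $t_t$ to $0$ (and using $\sigma_i(0)=0$) yields $\sigma_\ell(a_t t_t)=-\beta_\ell^{-1}\beta_{m_t}\sigma_{m_t}(t_t)$; substituting back, $\sigma_\ell(\sum_t a_t t_t)=\sum_t\sigma_\ell(a_t t_t)$. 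As $a_1,a_2\neq 0$, taking $x=a_1t_1$, $y=a_2t_2$ and the remaining $t_t=0$ gives $\sigma_\ell(x+y)=\sigma_\ell(x)+\sigma_\ell(y)$ for all $x,y$, so $\sigma_\ell$ is additive; then $\beta_{m_t}\neq 0$ and $\sigma_{m_t}(y)=-\beta_\ell\beta_{m_t}^{-1}\sigma_\ell(a_t y)$, a composite of additive maps, so every $\sigma_{m_t}$ is additive too.

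\emph{Propagation.} Let $G$ be the set of coordinates $i$ with $\sigma_i$ additive, so by the above $G\supseteq\{\ell,m_1,\dots,m_s\}$. Two moves extend $G$. First, if $i\in G$ and coordinate $i'$ is a nonzero scalar multiple of coordinate $i$ throughout $\mathcal A$, then the projection of $\mathcal A$ onto $\{i,i'\}$ is a diagonal line $\{(t,\mu t)\}$; its image under $\sigma_i\times\sigma_{i'}$ is a subspace of $\mathbb F_q^2$ of size $q$ surjecting onto both axes, hence a diagonal line $\{(s,\nu s)\}$, so $\sigma_{i'}(\mu t)=\nu\sigma_i(t)$ and $i'\in G$. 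Secondly, any coordinate lying in the support of a weight-$\geq 2$ column of $A$ is in $G$ by the argument above applied to that column. Now take an arbitrary coordinate $j$: if column $j$ has weight one and $j>k$ then coordinate $j$ is a scalar multiple of one of the first $k$ coordinates, reducing to that case by the first move; and one of the first $k$ coordinates, say $m$, either meets a weight-$\geq 2$ column and so lies in $G$ by the second move, or every column meeting row $m$ has weight one, in which case coordinate $m$ together with those columns spans a repetition-type direct summand of $\mathcal A$ — the degenerate situation that the hypothesis serves to exclude. Following these reductions places every coordinate in $G$, i.e.\ every $\sigma_j$, and hence every $\tau_j$, is additive.

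The crux is the propagation step. The extraction in the second paragraph is the essential mechanism and is essentially forced, since $\prod_i\sigma_i$ must carry the linear relations among the coordinates of $\mathcal A$ to those of $\mathcal B$; the real work — and where the hypothesis is genuinely used — is the combinatorial bookkeeping that lets a single weight-$\geq 2$ column reach every coordinate, together with the correct handling of weight-one columns and of any repetition-like substructure.
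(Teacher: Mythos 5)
Your argument follows essentially the same route as the paper's: after normalising so that $\sigma_i(0)=0$, you use the single linear relation that the projection of $\mathcal B$ onto $\{m_1,\ldots,m_s,\ell\}$ must satisfy to obtain $\sigma_\ell(\sum_t a_tt_t)=\sum_t\sigma_\ell(a_tt_t)$ --- this is the paper's equation (\ref{eq4}), derived there by equating the two generator-matrix expressions for a codeword and specialising --- then extract additivity of $\sigma_\ell$ from a column of weight at least two, pass it to the rows $m_t$ in its support, and transfer it across weight-one columns via the scalar relation $\sigma_{i'}(\mu t)=\nu\sigma_i(t)$. Your normalisation step, translating $\mathcal B$ by its codeword $(\sigma_1(0),\ldots,\sigma_n(0))$, is clean and in fact delivers Lemma~\ref{itsadditive} for free.

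The gap is in the last clause of your propagation step. You correctly observe that a row $m$ meeting only weight-one columns is reached by neither of your two moves, but you then dismiss this as ``the degenerate situation that the hypothesis serves to exclude.'' The hypothesis does not exclude it: it only asserts that \emph{some} column of $A$ or $B$ has weight at least two, which is entirely compatible with one particular row meeting only weight-one columns. Worse, in that situation the conclusion of the lemma genuinely fails, so no bookkeeping can close the gap. Take $q\geqslant 5$ and $\mathcal A=\mathcal B$ with generator matrix
$$
A=B=\left(\begin{array}{ccccc}1&0&0&1&0\\0&1&0&0&1\\0&0&1&0&1\end{array}\right),
$$
let $\sigma_1=\sigma_4$ be any non-additive permutation of ${\mathbb F}_q$ fixing $0$ (these exist for $q\geqslant 5$) and let $\sigma_2=\sigma_3=\sigma_5$ be the identity. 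Then $(a_1,a_2,a_3,a_1,a_2+a_3)\mapsto(\sigma_1(a_1),a_2,a_3,\sigma_1(a_1),a_2+a_3)\in\mathcal B$, the hypothesis holds because column $5$ has weight two, yet $\tau_1=\sigma_1$ is not additive. In fairness, the paper's own proof slides over the same point: from $\beta_{jr}\tau_j(a_j)=\tau_r(\alpha_{jr}a_j)$ it concludes that $\tau_j$ is additive ``for $j\in\{1,\ldots,k\}$,'' but that relation is vacuous when $\alpha_{jr}=0$, so it only yields additivity for rows in the support of a heavy column. A correct version needs either a stronger hypothesis (every row of $A$ meets a column of weight at least two) or a separate treatment of repetition-type direct summands, analogous to the all-weight-one case already handled in the proof of Theorem~\ref{mainthm}; you should not claim the stated hypothesis does this work for you.
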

 
 \begin{proof}
 Let $A=(\alpha_{jr})$ and let $B=(\beta_{jr})$.
 
We have that for each $a=(a_1,\ldots,a_k) \in {\mathbb F}_q^k$, there is a $b=(b_1,\ldots,b_k) \in {\mathbb F}_q^k$, such that
\begin{equation} \label{eq0}
 (\sigma_1(a_1),\ldots,\sigma_k(a_k),\sigma_{k+1}(\sum_{j=1}^k \alpha_{j,k+1}a_j),\ldots)
 =
 (b_1,\ldots,b_k,\sum_{j=1}^k \beta_{j,k+1} b_j,\ldots).
 \end{equation}
 
Hence, for $j \in \{1,\ldots,k\}$,
 $$
 b_j=\sigma_j(a_j)
 $$
 and
\begin{equation} \label{eq1}
 \sum_{j=1}^k \beta_{jr} \sigma_j(a_j)=\sigma_{r}( \sum_{j=1}^k \alpha_{jr} a_j),
 \end{equation}
 for $r \in \{ k+1,\ldots,n\}$.
 
 Setting $a_i=0$ for $i\neq j$ we deduce that
\begin{equation} \label{eq1.5}
 \beta_{jr}\sigma_j(a_j)+\sum_{i \neq j} \beta_{ir} \sigma_i(0) =\sigma_{r}( \alpha_{jr} a_j).
 \end{equation}
Summing over $j$, and substituting in (\ref{eq1})
 $$
 \sum_{j=1}^k \sigma_{r}( \alpha_{jr} a_j)- \sum_{j=1}^k\sum_{i \neq j} \beta_{ir} \sigma_i(0)=\sigma_{r}( \sum_{j=1}^k \alpha_{jr} a_j)
 $$
 which implies
  $$
 \sum_{j=1}^k \sigma_{r}( \alpha_{jr} a_j)-(k-1)\sum_{i=1}^k \beta_{ir} \sigma_i(0)=\sigma_{r}( \sum_{j=1}^k \alpha_{jr} a_j).
 $$
Again from (\ref{eq1})
\begin{equation} \label{eq3}
 \sum_{i=1}^k \beta_{ir} \sigma_i(0)=\sigma_{r}(0).
\end{equation}
Thus,
\begin{equation}  \label{eq3.5}
 \sum_{j=1}^k \sigma_{r}( \alpha_{jr} a_j)- (k-1) \sigma_{r}(0)=\sigma_{r}( \sum_{j=1}^k \alpha_{jr} a_j).
\end{equation}
 
Define
$$
\tau_{r}(x)=\sigma_{r}(x)-\sigma_{r}(0).
$$ 
Then (\ref{eq3.5}) becomes
\begin{equation} \label{eq4}
 \sum_{j=1}^k \tau_{r}( \alpha_{jr} a_j)=\tau_{r}( \sum_{j=1}^k \alpha_{jr} a_j).
\end{equation}

 Firstly, assume that the $r$-th column of $A$ is not of weight one. Without loss of generality $\alpha_{1r}, \alpha_{2r} \neq 0$, so substituting in (\ref{eq4}), $a_1=\alpha_{1r}^{-1}x$ and $a_2=\alpha_{2r}^{-1}y$, $a_j=0$ for $j \in \{3,\ldots,k\}$, we have that
$$
\tau_{r}(x+y)=\tau_{r}(x)+\tau_{r}(y).
$$
We conclude that $\tau_{r}$ is an additive permutation of ${\mathbb F}_q$.

Subtracting (\ref{eq3}) from (\ref{eq1.5}), we have that 
$$
\beta_{jr}( \sigma_j(a_j)-\sigma_j(0))=\tau_{r}(\alpha_{1r} a_1)
$$
This implies that 
$$
\tau_j(x)=\sigma_j(x)-\sigma_j(0)
$$
is additive too, for $j \in \{1,\ldots,k\}$.


If the $r$-th column of $A$ is of weight one then, from (\ref{eq1}),
$$
\sigma_r(\alpha_{ir} a_i)=  \sum_{j=1}^k \beta_{jr} \sigma_j(a_j)
$$
for some $i \in\{1,\ldots,k\}$. Since this holds for all $(a_1,\ldots,a_k) \in {\mathbb F}_q^k$, we conclude that $\beta_{jr}=0$ for $j\neq i$ and hence,
$$
\sigma_r(\alpha_{ir} a_i)=  \beta_{ir} \sigma_i(a_i).
$$
This implies $\sigma_r(0)=  \beta_{ir} \sigma_i(0)$ and so 
$$
\tau_r(\alpha_{ir} a_i)=  \beta_{ir} \tau_i(a_i).
$$
Since $\tau_i$ is additive, this implies $\tau_r$ is additive.
\end{proof}

The following lemma implies that we can assume that $\sigma_i$ is additive, by replacing $\sigma_i$ by $\tau_i$, for each $i \in \{1,\ldots,n\}$.

\begin{lemma} \label{itsadditive}
 If not all of the columns of $A$ and $B$ are of weight one then, for all $u=(u_1,\ldots,u_n) \in \mathcal A$, 
$$
(\tau_1(u_1),\ldots,\tau_n(u_n)) \in \mathcal B.
$$ 
\end{lemma}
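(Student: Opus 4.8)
The plan is to reuse the relations already derived in the proof of Lemma~\ref{additivelemma} and read them through the standard-form structure of the generator matrices. Recall that $A=(\alpha_{jr})$ and $B=(\beta_{jr})$ are in standard form, so the codewords of $\mathcal A$ are exactly the vectors $u=(u_1,\ldots,u_n)$ with $u_1,\ldots,u_k$ arbitrary and $u_r=\sum_{j=1}^k\alpha_{jr}u_j$ for $r\in\{k+1,\ldots,n\}$; likewise a vector $v\in{\mathbb F}_q^n$ lies in $\mathcal B$ if and only if $v_r=\sum_{j=1}^k\beta_{jr}v_j$ for every $r\in\{k+1,\ldots,n\}$.

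The key observation is that subtracting (\ref{eq3}) from (\ref{eq1}) yields, for every $r\in\{k+1,\ldots,n\}$ and every $a=(a_1,\ldots,a_k)\in{\mathbb F}_q^k$,
$$
\sum_{j=1}^k\beta_{jr}\,\tau_j(a_j)=\tau_r\Big(\sum_{j=1}^k\alpha_{jr}a_j\Big).
$$
Indeed, the left-hand side of (\ref{eq1}) minus that of (\ref{eq3}) is $\sum_{j=1}^k\beta_{jr}(\sigma_j(a_j)-\sigma_j(0))$, while the right-hand side of (\ref{eq1}) minus that of (\ref{eq3}) is $\sigma_r(\sum_{j=1}^k\alpha_{jr}a_j)-\sigma_r(0)$. (Alternatively, one can derive the same identity from the single-variable relation $\beta_{jr}\tau_j(x)=\tau_r(\alpha_{jr}x)$ appearing in the proof of Lemma~\ref{additivelemma} together with the additivity of $\tau_r$ furnished by that lemma.)

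Now take any $u=(u_1,\ldots,u_n)\in\mathcal A$ and set $a_j=u_j$ for $j\in\{1,\ldots,k\}$; by the standard form of $A$ we have $u_r=\sum_{j=1}^k\alpha_{jr}u_j$ for $r\in\{k+1,\ldots,n\}$. Applying the displayed identity with this $a$ gives $\tau_r(u_r)=\sum_{j=1}^k\beta_{jr}\tau_j(u_j)$ for every $r\in\{k+1,\ldots,n\}$. Hence the vector $(\tau_1(u_1),\ldots,\tau_n(u_n))$ has each coordinate indexed by $r>k$ equal to the linear combination $\sum_{j=1}^k\beta_{jr}\tau_j(u_j)$ of its first $k$ coordinates prescribed by $B$, which by the standard form of $B$ is exactly the condition for membership in $\mathcal B$. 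Therefore $(\tau_1(u_1),\ldots,\tau_n(u_n))\in\mathcal B$, as required.

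I do not anticipate a genuine obstacle here: once (\ref{eq1}) and (\ref{eq3}) are in place the argument is essentially bookkeeping. The only point to handle carefully is the translation between ``codeword of a code with a standard-form generator matrix'' and ``the first $k$ symbols are free and determine the remaining symbols by the prescribed linear combinations'', which is invoked once for $\mathcal A$ (to express $u_r$ in terms of $u_1,\ldots,u_k$) and once for $\mathcal B$ (to conclude membership from the identity for $\tau_r(u_r)$).
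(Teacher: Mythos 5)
Your proof is correct and is essentially the paper's own argument: the paper subtracts the $a=(0,\ldots,0)$ instance of (\ref{eq0}) (namely the vector $v$ with $v_j=\sigma_j(0)$, shown to lie in $\mathcal B$) from the general instance, whereas you perform the same subtraction coordinate-by-coordinate, taking (\ref{eq1}) minus (\ref{eq3}) and then verifying the parity relations imposed by the standard form of $B$. This is a presentational difference only; both versions rest on the $a=0$ specialization and the linearity of $\mathcal B$.
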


\begin{proof}

By Lemma~\ref{additivelemma} and (\ref{eq0}), for each $a=(a_1,\ldots,a_k) \in {\mathbb F}_q^k$, there is a $b=(b_1,\ldots,b_k) \in {\mathbb F}_q^k$, such that
$$
 (\tau_1(a_1),\ldots,\tau_k(a_k),\tau_{k+1}(\sum_{j=1}^k\alpha_{j,k+1}a_j),\ldots)+v=(b_1,\ldots,b_k)B,
$$
where $v=(v_1,\ldots,v_n) \in {\mathbb F}_q^n$ is defined by $v_j=\sigma_j(0)$.

Since $\tau_i$ is additive, $\tau_i(0)=0$. Thus, with $a=(0,\ldots,0)$ the above implies there is a $(b_1',\ldots,b_k')$ such that
$$
v=(b_1',\ldots,b_k')B.
$$
Thus
$$
 (\tau_1(a_1),\ldots,\tau_k(a_k),\tau_{k+1}(\sum_{j=1}^k\alpha_{j,k+1}a_j),\ldots)=((b_1,\ldots,b_k)-(b'_1,\ldots,b'_k))B,
$$
and hence for each $a=(a_1,\ldots,a_k) \in {\mathbb F}_q^k$, there is a $c=(c_1,\ldots,c_k) \in {\mathbb F}_q^k$, such that
$$
 (\tau_1(a_1),\ldots,\tau_k(a_k),\tau_{k+1}(\sum_{j=1}^k\alpha_{j,k+1}a_j),\ldots)=(c_1,\ldots,c_k)B.
$$
\end{proof}

We are now in a position to prove the main theorem. Note that an additive permutation of ${\mathbb F}_{p^h}$ is of the form
$$
x \mapsto \sum_{i=0}^{h-1} c_i x^{p^i}.
$$
This is easily verified since if we consider ${\mathbb F}_{p^h}$ as ${\mathbb F}_{p}^h$ then an additive map is given by an $h \times h$ matrix over ${\mathbb F}_{p}$. Thus, there are $p^{h^2}$ in total, which coincides with the number of functions which can be defined as above.

\begin{proof} (of Theorem~\ref{mainthm})

Suppose all of the columns of $A$ and $B$ are of weight one. Then, after a suitable permutation of the coordinates, $\mathcal A$ is linearly equivalent to a code $\mathcal A'$ whose codewords are
 \begin{equation} \label{eq13}
 (\underbrace{a_1,\ldots,a_1}_{m_1},\underbrace{a_2,\ldots,a_2}_{m_2}, \ldots,\underbrace{a_k,\ldots,a_k}_{m_k}).
\end{equation}
 Thus, since $\mathcal A'$ and $\mathcal B$ are equivalent, there are permutations $\theta_i$ and a function $f$ from $\{1,\ldots,n\}$ to $\{1,\ldots,k\}$, such that for all $(b_1,\ldots,b_k) \in {\mathbb F}_q^k$,
 $$
 (\theta_1(b_{f(1)} \beta_{f(1),1}),\theta_2(b_{f(2)} \beta_{f(2),2}),\ldots,\theta_n(b_{f(n)} \beta_{f(n),n}))
 $$
is equal to (\ref{eq13}). We can scale the columns of $B$ and obtain a code $\mathcal B'$ which is linearly equivalent to a code $\mathcal B$ and has codewords
$$
(\theta_1(b_{f(1)}),\theta_2(b_{f(2)} ),\ldots,\theta_n(b_{f(n)})).
$$
Now, comparing with (\ref{eq13}) we have that, for example, if $m_1 \geqslant 2$, 
$$
\theta_1(b_{f(1)})=\theta_2(b_{f(2)})
$$
for all $(b_1,\ldots,b_k) \in {\mathbb F}_q^k$. This implies $f(1)=f(2)$, and hence $\theta_1=\theta_2$, etc. Thus, we conclude that $\mathcal B'$ has codewords
 \begin{equation} 
 (\underbrace{b_{g(1)},\ldots,b_{g(1)}}_{m_1},\underbrace{b_{g(2)},\ldots,b_{g(2)}}_{m_2}, \ldots,\underbrace{b_{g(k)},\ldots,b_{g(k)}}_{m_k})
\end{equation}
and is linearly equivalent to $\mathcal A'$. Hence, $\mathcal A$ is linearly equivalent to $\mathcal B$.

Suppose not all of the columns of $A$ and $B$ are of weight one.

By Lemma~\ref{itsadditive}, we can assume for $j \in \{1,\ldots,n\}$,
$$
\sigma_j(x)=\sum_{i} c_{ji} x^{p^{i}}.
$$
As in the previous proofs, let $B=(\beta_{jr})$ and let $A=(\alpha_{jr})$.

Substituting in (\ref{eq1}), we have that for all $(a_1,\ldots,a_k) \in {\mathbb F}_q^k$ and $r \in \{k+1,\ldots,n\}$, 
$$
 \sum_{j=1}^k \sum_{i=0}^{h-1} \beta_{jr} c_{ji} a_j^{p^{i}}=\sum_{i} c_{ri} (  \sum_{j=1}^k \alpha_{jr} a_j)^{p^{i}}.
$$
Therefore, for all $j \in \{1,\dots,k\}$, $r \in \{k+1,\ldots,n\}$ and $i \in \{0,\ldots,h-1\}$,
\begin{equation} \label{bigeqn}
 \beta_{jr} c_{ji}=c_{ri}\alpha_{jr} ^{p^{i}}.
\end{equation}
Let $t\in \{0,\ldots,h-1\}$ be minimal such that $c_{1t} \neq 0$. We aim to show that we can multiply columns and rows of $B$ by non-zero elements of ${\mathbb F}_q$ in such a way that the matrix $B$ will transform into $A^{p^{t}}$. If both $\alpha_{jr}$ and $\beta_{jr}$ are zero then they will be unaffected by these multiplications and so for this particular entry the previous statement holds. Thus, we assume that  $\alpha_{jr}$ and $\beta_{jr}$ are not both zero for any particular $j \in \{1,\dots,k\}$ and $r \in \{k+1,\ldots,n\}$.

If $\beta_{jr}=0$ then, by the previous paragraph, we can assume $\alpha_{jr}\neq 0$. Hence, $c_{ri}=0$ for all $i \in \{0,\ldots,h-1\}$ which implies $\sigma_r=0$, a contradiction.

If $\alpha_{jr}=0$ then, by the previous paragraph, we can assume $\beta_{jr}\neq 0$. Hence, $c_{ji}=0$ for all $i \in \{0,\ldots,h-1\}$ which implies $\sigma_j=0$, a contradiction.


Thus, we can assume $c_{1t}$, $\alpha_{1r}$ and $\beta_{1r}$ are non zero.  Thus, (\ref{bigeqn}) implies $c_{rt} \neq 0$ for all $r \in \{k+1,\ldots,n\}$, and so 
$$
\frac{\beta_{jr}c_{jt}}{\beta_{1r}c_{1t}}=\frac{\alpha_{jr}^{p^t}}{\alpha_{1r}^{p^t}}.
$$
If $c_{jt}=0$ for some $j$ then $\alpha_{jr}=0$, which we have already ruled out.

Thus, we can divide the $r$-th column of $B$ by $\beta_{1r}$ and multiply the $j$-th row of this generator matrix by 
$$
\frac{c_{jt}}{c_{1t}}
$$ 
(which does not change the code) and we get a generator matrix whose $j$-th coordinate of the $r$-th column is
 $$
\frac{\alpha_{jr}^{p^t}}{\alpha_{1r}^{p^t}}
$$

Now, multiplying  the $r$-th column of the generator matrix by $\alpha_{1r}^{p^t}$, we get a generator matrix of a code linearly equivalent to $\mathcal B$ whose $r$-th column is the $r$-th column of $A^{p^{t}}$. 

Therefore, we conclude that $\mathcal A$ and $\mathcal B$ are semi-linearly equivalent.
\end{proof}

\section{Additive MDS codes}

The motivation for this article stems from the discussion after Theorem 3.4 in \cite{BGL2021}. In that discussion the following question is asked. ``If two additive codes $\mathcal A$ and $\mathcal B$ are equivalent then are they necessarily P$\Gamma$L-equivalent? Equivalently, if two additive codes $\mathcal A$ and $\mathcal B$ are P$\Gamma$L-inequivalent then is it true that they are inequivalent?" They also pose the same question for linear codes.

We have proved in Theorem~\ref{mainthm} that it is true for linear codes. Here, we answer the question in the affirmative for additive MDS codes. Recall that an MDS code is a code attaining the Singleton bound. That is, if $C$ is a MDS code of length $n$ over $F$ of minimum distance $d$ then $|C|=|F|^{n-d+1}$.

An additive code over ${\mathbb F}_{q}$ is linear over the prime subfield ${\mathbb F}_p$, where $q=p^h$. Two additive codes $\mathcal A$ and $\mathcal B$ are {\em additively equivalent} if, after some permutation of the coordinates of the elements of $\mathcal A$, there exist permutations 
$$
\sigma_j =\sum_{i=0}^{h-1} c_{ij} x^{p^i}.
$$
such that for all $u=(u_1,\ldots,u_n) \in \mathcal A$, 
$$
(\sigma_1(u_1),\ldots,\sigma_n(u_n)) \in \mathcal B.
$$

An additive MDS code of size $q^k$ is the row space over ${\mathbb F}_p$ of a $kh \times n$ matrix, whose elements are from ${\mathbb F}_q$. We can consider the elements of ${\mathbb F}_{q}$ as row vectors of ${\mathbb F}_p^h$, so that the generator matrix becomes a $kh \times nh$ matrix. 

It is fairly easy to show that we can use Gaussian elimination and additive permutations to obtain a generator matrix for an additive MDS code, additively equivalent to the original code, whose generator matrix is of the form
$$
\left(
\begin{array}{ccccccccccccc}
I_h & O_h & O_h & O_h &  \ldots & \ldots & O_h & O_h & \alpha_{r,1} & \alpha_{r+1,1} & \ldots \\
O_h & I_h & O_h & O_h &  \ldots & \ldots & O_h & O_h & \alpha_{r,2} & \alpha_{r+1,2} &  \ldots \\
O_h & O_h & I_h & O_h &  \ldots &  \ldots & O_h & O_h & \alpha_{r,3} & \alpha_{r+1,3}  & \ldots \\
O_h & O_h & O_h & I_h &  \ddots &  \ldots & O_h & O_h &\alpha_{r,4} & \alpha_{r+1,4}  & \ldots \\
\vdots & \vdots &  \ldots &  \ddots & \ddots & \ddots & \ddots & \vdots & \vdots& \vdots \\
\vdots & \vdots &  \ldots &  \ldots & \ldots & \ddots & \ddots & \vdots & \vdots & \vdots\\
O_h & O_h & O_h & O_h &  \ldots &  \ldots & I_h & O_h &\alpha_{r,k-1} & \alpha_{r+1,k-1}  &  \ldots \\
O_h & O_h & O_h & O_h & \ldots &  \ldots &  O_h & I_h &\alpha_{r,k} & \alpha_{r+1,k}  &  \ldots \\
\end{array}
\right),
$$
where $I_h$ is the $h\times h$ identity matrix, $O_h$ is the $h \times h$ zero matrix and $\alpha_{i,j}$ are $h\times h$ non-singular matrices. See \cite{BGL2021}, for more details on this.

Suppose that the additive MDS codes $\mathcal A$ and $\mathcal B$ are equivalent. We can mimic the proof of Lemma~\ref{additivelemma} and Lemma~\ref{itsadditive}, replacing $a_i$ and $b_i$ by row vectors in ${\mathbb F}_p^h$. The two proofs follows in the same way as for linear codes.  The conclusion is then that $\sigma_j$ can be taken to be additive and we conclude that $\mathcal A$ and $\mathcal B$ are additively equivalent. Thus, we have the following theorem.
\begin{theorem} \label{addmdsthm}
Two additive MDS codes over a field are equivalent if and only if they are additively equivalent.
\end{theorem}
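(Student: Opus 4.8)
The reverse implication is immediate: every map $x\mapsto\sum_{i=0}^{h-1}c_{ij}x^{p^i}$ occurring in the definition of additive equivalence is in particular a permutation of $\mathbb F_q$, so an additive equivalence is a special case of an equivalence. So suppose $\mathcal A$ and $\mathcal B$ are equivalent additive MDS codes of size $q^k$ and length $n$ over $\mathbb F_q$, with $q=p^h$, and let $\sigma_1,\dots,\sigma_n$ be the permutations realising the equivalence; the accompanying permutation of the coordinates is itself permitted in an additive equivalence, so we may assume it is trivial.

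First I would pass to the block normal form displayed above. Since $\mathcal A$ is MDS, any $k$ of its coordinates form an information set --- a codeword vanishing on $k$ coordinates has weight at most $n-k<d=n-k+1$ and hence is zero, so the projection onto those $k$ coordinates is an $\mathbb F_p$-linear bijection onto $\mathbb F_q^k$ --- and likewise for $\mathcal B$. Thus, after Gaussian elimination over $\mathbb F_p$ and additive coordinate permutations (cf.\ \cite{BGL2021}), I may take generator matrices for $\mathcal A$ and $\mathcal B$ of the shape displayed above, with $h\times h$ redundancy blocks $\alpha_{i,j}$ and $\beta_{i,j}$; applying the same MDS argument to the $k$ coordinates consisting of all the identity coordinates but one together with a single redundancy coordinate forces each redundancy block to be non-singular. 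All the operations used here are $\mathbb F_p$-linear, so it suffices to prove the theorem for the codes in this normal form.

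Next I would carry out the block analogues of Lemma~\ref{additivelemma} and Lemma~\ref{itsadditive}, reading $a=(a_1,\dots,a_k)$ and $b=(b_1,\dots,b_k)$ as tuples of row vectors in $\mathbb F_p^h$ and the $\alpha_{i,j},\beta_{i,j}$ as $h\times h$ matrices acting on the right; with these conventions equations (\ref{eq0})--(\ref{eq4}) carry over verbatim. If $k\ge 2$ the normal form has a redundancy block-column all of whose blocks are non-singular, and in the analogue of (\ref{eq4}) for that column one may set two of the summands equal to arbitrary $x,y\in\mathbb F_q$ (possible because those blocks are invertible) and the rest to zero, obtaining $\tau_r(x+y)=\tau_r(x)+\tau_r(y)$, where $\tau_j(x):=\sigma_j(x)-\sigma_j(0)$; the relation $\tau_r(a_j\alpha_{jr})=\tau_j(a_j)\beta_{jr}$ obtained by subtracting (\ref{eq3}) from (\ref{eq1.5}) then propagates additivity to every $\tau_j$, the blocks being invertible, and the argument of Lemma~\ref{itsadditive} gives $(\tau_1(u_1),\dots,\tau_n(u_n))\in\mathcal B$ for all $u=(u_1,\dots,u_n)\in\mathcal A$. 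Since each $\sigma_j$ is a permutation of $\mathbb F_q$, so is $\tau_j$, and being additive it has the form $\sum_{i=0}^{h-1}c_{ij}x^{p^i}$; together with the coordinate permutation from the normal-form step, the $\tau_j$ exhibit $\mathcal A$ and $\mathcal B$ as additively equivalent. In contrast with Theorem~\ref{mainthm}, no further normalisation of the $c_{ij}$ is needed, since additive equivalence already allows an arbitrary $\mathbb F_p$-linear bijection in each coordinate.

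It remains to treat the case in which the argument above does not apply, namely when every block-column of the normal form has weight one; for an MDS code this happens only if $n=k$ or $k\le 1$. If $n=k$ then $\mathcal A=\mathbb F_q^n$, hence $\mathcal B=\mathbb F_q^n$, and the identity is an additive equivalence. If $k\le 1$, the $\mathbb F_p$-linear column scalings in the normal-form step turn $\mathcal A$ into the repetition code $\{(c,\dots,c):c\in\mathbb F_q\}$, and since $\mathcal B$ is additive of minimum distance $n$ each of its coordinate projections is an $\mathbb F_p$-linear bijection onto $\mathbb F_q$; re-parametrising $\mathcal B$ by its first coordinate then writes $\mathcal B=\{(x,\rho_2(x),\dots,\rho_n(x)):x\in\mathbb F_q\}$ with every $\rho_j$ additive, which is an additive equivalence with the repetition code. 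I expect the real obstacle to be the bookkeeping of the normal-form reduction --- verifying that only $\mathbb F_p$-linear operations are used and that the MDS hypothesis genuinely forbids zero or weight-one blocks once $k\ge 2$ --- after which the block versions of Lemmas~\ref{additivelemma} and~\ref{itsadditive} are mechanical.
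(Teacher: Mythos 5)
Your proposal is correct and follows essentially the same route as the paper: reduce to the block normal form and run the block analogues of Lemma~\ref{additivelemma} and Lemma~\ref{itsadditive} with $a_i,b_i$ read as row vectors in ${\mathbb F}_p^h$, stopping once additivity is established since additive equivalence requires no further normalisation of the coordinate maps. You also make explicit two points the paper leaves implicit --- the non-singularity of the redundancy blocks coming from the MDS condition, and the degenerate cases $n=k$ and $k\le 1$ in which the weight-one hypothesis of those lemmas fails --- and both are handled correctly.
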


\begin{example}
In \cite[Table 3]{BGL2021}, the second row shows that there are 3 additive MDS $(8,9^3,6)_9$ codes, i.e. MDS codes of length $8$ and minimum distance $6$. Two of these codes, $C_1$ and $C_2$ are linear MDS codes and have generator matrices
$$
G_1=\left(\begin{array}{cccccccc}
1 & 0&  0 &  1 & e & e^6 & e^6 &  e \\
0 & 1 & 0 & e & 1 &  e & e^6 &  e^6 \\
0 & 0 & 1 & e^6 & e & 1 & e & e^6 \\
\end{array}\right), \ \ 
G_2=\left(\begin{array}{cccccccc}
1 & 0&  0 &  1 & e^5 & e^7 & e^7 &  e^5 \\
0 & 1 & 0 & e & 1 &  e^5   & e^7 &  e^7 \\
0 & 0 & 1 & e^7 & e^5 & 1 & e^5 & e^7 \\
\end{array}\right)
$$
respectively. Note that $e \in {\mathbb F}_9$ is a primitive element such that $e^2 = e + 1$. It is explained in \cite{BGL2021} that $C_2$ is a truncated Reed- Solomon code, while $C_1$ is not, implying that they are not semi-linearly equivalent. Additionally, one may show that $C_1$ and $C_2$ are not semi-linearly equivalent by confirming that the columns of $G_2$ are contained in the conic
$$
x_1x_2 + e^3x_1x_3 + x_2x_3,
$$
 while the columns of $G_1$ are not contained in any conic.
 
The last of the three additive MDS ($8,9^3,6)_9$ codes, $C_3$, is not semi-linearly equivalent to a linear code. It has generator matrix
$$
\left(
\begin{array}{cc|cc|cc|cc|cc|cc|cc|cc}
   1 & 0 & 0 & 0 & 0 & 0 & 1 & 0 & 1 & 0  & 1 & 0 & 1 & 0 & 1 & 0 \\
   0 & 1 & 0 & 0 & 0 & 0 & 0 & 1 & 0 & 1 & 0 & 1 & 0 & 1 & 0 & 1 \\
   0 & 0 & 1 & 0 & 0 & 0 & 1 & 0 & 2 & 2 & 0 & 2 & 2 & 0 & 1 & 2 \\
   0 & 0 & 0 & 1 & 0 & 0 & 0 & 1 & 0 & 2 & 1 & 1 & 1 & 2 & 1 & 0 \\
   0 & 0 & 0 & 0 & 1 & 0 & 1 & 0 & 0 & 1 & 2 & 0 & 1 & 1 & 2 & 1 \\
   0 & 0 & 0 & 0 & 0 & 1 & 0 & 1 & 2 & 1 & 2 & 2 & 2 & 0 & 0 & 2 \\
\end{array}\right).
$$
We now know, by Theorem~\ref{mainthm}, that $C_1$ and $C_2$ are not equivalent and, by Theorem~\ref{addmdsthm}, these codes are also not equivalent to $C_3$.

MDS codes have been classified (up to equivalence) over alphabets of size at most $8$, see \cite{Alderson2006}, \cite{KDO2015} and \cite{KO2016}. If and when MDS codes over alphabets of size $9$ are classified, there will be at least three equivalence classes of codes of length $8$, two of which will contain linear codes and one of which will contain an additive code. We conjecture that there will be no further equivalence class.
\end{example}

\section{Comments}

It is unclear to us if Theorem~\ref{addmdsthm} will apply to all additive codes. The principal problem in proving such a statement is that one cannot assume a generator matrix in standard form. It would be interesting to have a counterexample. In the proof of Theorem~\ref{addmdsthm}, we view additive MDS codes over ${\mathbb F}_q$, $q=p^h$, as linear codes over the vector space ${\mathbb F}_p^h$, i.e. $1\times h$ matrices over ${\mathbb F}_p$. It is  possible that one  can extend   Theorem~\ref{addmdsthm} to codes over general matrices. Indeed it would be interesting to study MDS codes over matrices. The smallest alphabet which is not equivalent to a linear or additive code over ${\mathbb F}_q$ would be over $2 \times  2$ matrices over ${\mathbb F}_2$; so the alphabet would have size $16$.

\vspace{1cm}

   Simeon Ball\\
   Departament de Matem\`atiques, \\
Universitat Polit\`ecnica de Catalunya, \\
M\`odul C3, Campus Nord,\\
Carrer Jordi Girona 1-3,\\
08034 Barcelona, Spain \\
   {\tt simeon.michael.ball@upc.edu} \\

James Dixon\\
Facultat de Matem\`atiques, \\
Universitat Polit\`ecnica de Catalunya, \\
Carrer de Pau Gargallo, 14, \\
08028 Barcelona, Spain\\
  {\tt james.dixon@estudiantat.upc.edu} \\

\end{document}